\newtheorem{theorem}{Theorem}
\newtheorem{proof}{Proof}
\newcommand{\qed}{\hfill \ensuremath{\Box}}
\let\@fnsymbol\@arabic
\begin{document}
\title{Spread of Crime Dynamics: A mathematical  approach}%
\author{Michael Aguadze$, ^{1}$ \vspace*{0.2cm} Ana Vivas$,^{2}$ \vspace*{0.2cm} Sujan Pant$,^{2}$ \vspace*{0.2cm} Kubilay Dagtoros$^{2}$\\
$^{1}${Department of Electrical Engineering, Norfolk State University, Norfolk, United States},\\
$^{2}${Department of  Mathematics, Norfolk State University, Norfolk, United States}
}
\maketitle

\begin{abstract}
In this work, the spread of crime dynamics in the US is analyzed from a mathematical scope, an epidemiological model is established, including five compartments: Susceptible ($S$), Latent 1 ($E_1$), Latent 2 ($E_2)$, Incarcerated ($I$), and Recovered ($R$). A system of differential equations is used to model the spread of crime. A result to show the positivity of the solutions for the system is included. The basic reproduction number and the stability for the disease-free equilibrium results are calculated following epidemiological theories. Numerical simulations are performed with US parameter values. Understanding the dynamics of the spread of crime helps to determine what factors may work best together to reduce violent crime.
\end{abstract}


\section{Introduction}

The United States of America has one of the highest per capita incarceration rates in the world \cite{FBI_report}. In fact, every single US state has higher incarceration rates than most nations on Earth . The number of incarcerated individuals and all the entities involved in this process cost billions of dollars to the US taxpayers. Criminal activity has been considered a contagious phenomenon by several authors [refs]. In this work, an epidemiological model is presented under the assumption that an individual can begin criminal activity induced by the influence of their peers or by an intrinsic desire to commit criminal activity without being induced by others.\\ 
Criminality is a social phenomenon that can be spread within social communities that share a common demographic identity that includes race, ethnicity, economic opportunity, education, and political socialization, many authors studied this phenomena from different perspectives \cite{Clinard}, \cite{Becker},\cite{Glueck},\cite{Santoja},\cite{Ehrlich}.  Further, relevant literature indicates that criminality and recidivism can be largely attributed to structural social disparities embedded in the legal, political, and economic institutions \cite{Akers},\cite{Crane}. Additionally, some studies indicated that criminal tendency is more prominent when an individual has experienced a childhood trauma, a study conducted for incarcerated women depicts this relation in \cite{Lehrer}. Previous work had been done using compartmental models to study the dynamics of crime \cite{McMillon}, \cite{Zhao_Feng_CCC},\cite{Gordon} This work aims to provide an understanding of this social science phenomenon through a mathematical lens. A compartmentalized modeling method is used to understand the dynamics of at-risk populations, \cite{Brauer},\cite{Banks_CCC},\cite{Hethcote},\cite{Kretzschmar},\cite{Lawson_Marion}.\\  
The model assumes that the total population $N$ is divided into five compartments, which include:   S (Susceptible individuals with no criminal behavior), $E_1$ (Latent 1 individuals with criminal behavior, who never entered the legal system and never being incarcerated), $E_2$ (Latent 2 repeat offenders, who were incarcerated, released and committed a crime again), I (Incarcerated individuals), and R (Released individuals).  In the analysis, we evaluate the basic reproduction number \cite{CCC_ZFeng},\cite{Diekman},\cite{VanWat}, and calculate the disease-free equilibrium, the endemic equilibrium, and stability. 
Additionally, simulations are included using data for the USA. Simulations were performed with data from the at-risk population, which primarily are made up of racial and ethnic minorities experiencing socio-economic disparities. The results reinforce the findings that crime and incarceration rates are associated with structural inequalities stemming from racial prejudice towards minority communities. Hence, this analysis has policy implications for understanding the spread and dynamics of violent crime.

The second section includes the description of the compartments and the parameters used to model the dynamics for the spread of crime. For modeling purposes, assumptions corresponding to the compartments are included {ref}, as well as a description of some human behaviors for individuals who had been incarcerated and reasons why, after they had been released from jail or prison, two-thirds of them tend to commit crimes again [ref]. 

In the third section, the mathematical analysis of the model including results for the positivity of the solutions and the stability of the free-disease equilibrium,  the basic reproduction number (rate of secondary infection after being exposed to a criminal individual) evaluated based on the next-generation matrix method \cite{VanWat}. Numerical simulations for the reproduction number are included in this section.

In the fourth section, numerical simulations are included for different scenarios, varying the most relevant parameters in the model. Identifying the parameters in the model to find a way to increase the recovery flow of first-time offenders back to law-abiding citizens is a priority of this work.


\section{Modeling the Spread of Crime}

The spread of crime dynamics is considered a contagious disease, due to social interactions in the community, several factors can influence an individual to commit a crime. Assuming that the total population is divided into five groups, Susceptible ($S$) (not infected potentially at risk), 	Exposed 1  ($E_1$) (has the disease for the first time, in this case, committing a crime, but neither caught nor found guilty).Infected ($I$) (in this case incarcerated as they are showing as being "infected" whereas others exposed may be committing crimes but did not enter the judicial system, based on crimes reported). Recovered individuals ($R$) (in this case it means released from incarceration). Exposed 2 ($E_2$) (has gone back to committing crime) and Exposed 2 could flow into infected (incarcerated). Along with the groups mentioned above, the model allows for population flow from Susceptible (or law-abiding) to Exposed 1, or (committing crimes), some individuals can commit a crime without having contact with other criminally active individuals, Also, someone who has recovered may transition to law-abiding, and avoiding the risk of Exposed 2 (Recidivism)- Recidivism refers to the behavior as a result in the rearrest or re conviction or a return to prison within 3 years after release \cite{NCJ2018}. 

Most inmates are not prepared for a successful return to society. Risk factors of trauma and the cycle of criminal activity have strong connections to impaired decision-making, damaged social relationships, addiction, and compromised physical well-being (Cotter et al., 2016). To reduce recidivism rates, improve the lives of those who have experienced trauma, and prevent the future cycle of victims, an innovative approach to treatment and care of those in correctional settings is crucial.. Table 1 includes a description of the parameters, and Figure 1 shows the Flow Diagram between compartments.

The following flow chart represents the dynamics for the transition of individuals between compartments: 

 \[
\xymatrix{
& & & & & &&\\
\ar[r]_{\Lambda} &
\boxed{S}\ar[u]_{d} \ar@/^/[rr]^{\beta_{10}} \ar@/_/[rr]_{\beta_{11}}
&&\boxed{E_1} 
\ar@/^45pt/[rrrr]^{\gamma_1}
\ar[u]_{d} \ar[rr]_{\theta_1} &&\boxed{I} 
\ar@/_/[u]_{d} \ar@/^/[u]^{\delta}
 \ar[rr]^{\gamma_2} & & 
 \boxed{R} \ar@/_/[dll]_{\theta_3}\\
 &&&&&\boxed{E_2} \ar@/_/[urr]_{\gamma_3} \ar[u]^{\theta_2} \ar[d]^{d}&&\\
 & & & & &&
}\]

\begin{table*}
\caption{The description of parameters}
\begin{center}\label {Table 1}
\begin{tabular}{|| p{0.5in} | p{3 in} | p{0.8in} ||}\hline \hline
\small
Parameters & Description & value(default) \\ \hline \hline
$\Lambda$ & Population birth rate & 0.012 \cite{Martin}\\ \hline
$\beta_{10} $ &Effective rate into criminal activity without having contact with other criminally
active people & 0.18 \cite{BJSDead}\\\hline
$\beta_{11} $ & Effective contact rate into criminal activity as a result of having contact with other criminally active people & 0.24 \cite{FBI_report}\\ \hline
$ \theta_1 $ & Incarceration rate (police, courts, correctional systems) & 0.0035 \cite{Martin}\\ \hline
$ \theta_2 $ & Re-Incarceration rate (police, courts, correctional systems) & 0.0044 \cite{NCJ2018}\\ \hline
$ \theta_3 $ & Recidivist rate (repeated offenders)& 0.6666 \cite{NCJ2018} \\ \hline
$\gamma_1 $ & Recovery rate from criminal activity not related to the experience of incarceration &  0.9933 \cite{NCJ2018}\\ \hline
$\gamma_2 $ & Recovery rate from criminal activity related to
the experience of incarceration & 0.794 \cite{NCJ2018}\\ \hline
$ \gamma_3 $ & Recovery rate for recurrent criminals related with the experience  incarceration. & 0.3334 \cite{NCJ2018}\\ \hline
$\delta$ & Mortality rate associated with incarceration & 0.0033 \cite{BJSDead}\\ \hline
$d$ & Population death rate & 0.0132 \cite{Ahmad}\\
\hline
 \end{tabular}
 \end{center}
\end{table*} 
 
The dynamics for the spread of crime can be modeled by the following system of ordinary differential equations

\begin{align}\label{ODE_SYS}
\frac{dS}{dt} &=\Lambda-\beta_{11}\frac{S(E_1+E_2)}{N}-(\beta_{10}+d)S\\
\frac{dE_1}{dt} &= \beta_{11}\frac{S(E_1+E_2)}{N}+\beta_{10}S-(d+\theta_1+\gamma_1)E_1\\
\frac{dE_2}{dt} &= \theta_3 R - (\theta_2+\gamma_3+d) E_2\\
\frac{dI}{dt} &=  \theta_1 E_1 + \theta_2 E_2 - (\gamma_2+\delta+d)I\\
\frac{dR}{dt} &= \gamma_1E_1+\gamma_2I+\gamma_3E_2-(\theta_3+d) R
\end{align}
where 
$$N = S + E_1 + E_2 + I + R$$


\section{Mathematical Analysis of the Model}
This section includes a result to show the positiveness and long-term behavior for the Solutions of System \ref{ODE_SYS}.

\begin{theorem}
\label{theoremone}
If each compartment is non-negative at $t=0$, then each compartment is non-negative for time $t>0$.
Moreover,
$$\lim_{t\rightarrow \infty} N(t)\leq \frac{\Lambda}{d}.$$
\end{theorem}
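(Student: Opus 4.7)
The plan is to tackle the two assertions separately: first positivity, by showing the non-negative orthant is forward-invariant, then the long-run bound, by summing the equations and applying a comparison inequality.

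For positivity, I would verify that on every boundary face of $\mathbb{R}_{\geq 0}^{5}$ the vector field either points inward or is tangent. Setting each coordinate in turn to zero while keeping the others non-negative yields
\[
\dot S\big|_{S=0}=\Lambda\ge 0,\qquad \dot E_1\big|_{E_1=0}=\beta_{11}\tfrac{SE_2}{N}+\beta_{10}S\ge 0,\qquad \dot E_2\big|_{E_2=0}=\theta_3 R\ge 0,
\]
\[
\dot I\big|_{I=0}=\theta_1E_1+\theta_2E_2\ge 0,\qquad \dot R\big|_{R=0}=\gamma_1E_1+\gamma_2I+\gamma_3E_2\ge 0.
\]
A first-exit-time argument closes the loop: if $t^{\ast}:=\inf\{t>0:\text{some coordinate is negative}\}$ were finite, continuity would force one coordinate to vanish at $t^{\ast}$ while the remaining ones are still non-negative, so its derivative there would be non-negative by the list above, contradicting its being strictly negative just after $t^{\ast}$. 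Equivalently, each equation rewrites in the form $\dot X=-a_X(t)X+f_X(t)$, whose integrating-factor representation $X(t)=e^{-\int_0^t a_X}X(0)+\int_0^t e^{-\int_s^t a_X}f_X(s)\,ds$ is manifestly non-negative once the corresponding $f_X$ is known to be non-negative, so the coordinates may be bootstrapped simultaneously.

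For the asymptotic bound, summing the five equations cancels every interaction term and leaves
\[
\dot N = \Lambda - dN - \delta I \le \Lambda - dN,
\]
where the inequality uses $I\ge 0$ from the first part. A standard comparison with the linear ODE $\dot u=\Lambda-du$ gives
\[
N(t)\le \tfrac{\Lambda}{d}+\bigl(N(0)-\tfrac{\Lambda}{d}\bigr)e^{-dt},
\]
whence $\limsup_{t\to\infty}N(t)\le \Lambda/d$ drops out.

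The one genuinely delicate point, and the only thing I expect to require a sentence of care, is that the incidence term $S(E_1+E_2)/N$ is singular at $N=0$. Assuming $N(0)>0$, the $S$-equation with constant source $\Lambda$ forces $S(t)>0$ for every $t>0$, so $N(t)\ge S(t)>0$ throughout; this keeps the right-hand side smooth along the trajectory, combines with $N\le \max\{N(0),\Lambda/d\}$ to rule out finite-time blow-up, and thereby legitimizes the limit statement for all $t\ge 0$.
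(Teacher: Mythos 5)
Your proposal is correct and follows essentially the same route as the paper: positivity via the invariance of the non-negative orthant (your boundary-face/first-exit-time check is equivalent to the paper's differential inequalities $\dot X\geq -a_X X$ and the resulting exponential lower bounds), and the asymptotic bound by summing the equations and comparing $\dot N\leq \Lambda-dN$ with the linear ODE. Your closing remark on the singularity of $S(E_1+E_2)/N$ at $N=0$ and on ruling out finite-time blow-up addresses a point the paper leaves implicit, and is a worthwhile addition.
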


\begin{proof}
Assume that $T$ is the maximum time for the epidemic. That is, 
$$T=\sup\left\{S>0,E_1\geq 0,E_{2}\geq 0, I\geq 0, R\geq 0\right\}\in [0,t].$$

Therefore for $T>0$, from $E_1$-equation of System \ref{ODE_SYS}, 

$$\frac{dE_1}{dt}\geq -(\theta_1+\gamma_1+d)E_1,$$

from which it holds

$$E_1(T)\geq E_1(0)\exp\left\{-(\theta_1+\gamma_1+d)t\right\},$$

hence, $E_1(T)\geq 0$ for all $T>0$.

From $E_2$-equation of System \ref{ODE_SYS}, 

$$\frac{dE_2}{dt}\geq -(\theta_2+\gamma_3+d)E_2$$
from which it hold

d$$E_2(T)\geq E_2(0)\exp\left\{-(\theta_2+\gamma_3+d)t\right\}$$

hence, $E_2(T)\geq 0$ for all $T>0$.

From $S$-equation of System \ref{ODE_SYS}, 

$$\frac{dS}{dt}\geq -\left\{\beta_{11}\frac{S(E_1+E_2)}{N}+\beta_{10}\right\}S$$

from which it hold
$$S(T)\geq S(0)\exp \left\{\int_0^t-\left\{\beta_{11}\frac{S(E_1+E_2)}{N}+\beta_{10}\right\}ds\right\}$$
hence, $S(T)\geq 0$ for all $T>0$.

The positiveness of the remaining compartments can be shown in a similar way.

The long-term behavior for the total population $N$ can be analyzed as follows: Adding the left hand sides from \ref{ODE_SYS}, it holds 
   $$\frac{dN}{dt}=\Lambda-\delta I- d N,$$
then    
    $$\frac{dN}{dt} \leq \Lambda-d N,$$
from which it holds
$$\frac{dN}{dt}+d N\leq \Lambda,$$
then 
$$N(t)\leq \frac{\Lambda}{d}+\left(N_0-\frac{\Lambda}{d}\right)\exp(-dt).$$

Since $(N_0-{\Lambda/d})$ is a constant and 
$d>0$, 
\begin{equation*}
\frac{\Lambda}{d}+\left(N_0-\frac{\Lambda}{d}\right)\exp(-d t)\rightarrow \frac{\Lambda}{d}\text{ as }t\to\infty.\end{equation*}
then $\lim_{t\rightarrow \infty} N(t)\leq \frac{\Lambda}{d}$ as desired.

The feasible region $D$, for System \ref{ODE_SYS} is therefore 

$$D=\left\{(S,E_1,E_2,I,R)\in \mathbb{R}^{5}_+\quad \vert \quad N\leq \frac{\Lambda}{d}\right\}.$$
$\qed$
\end{proof}

\subsection{The Basic Reproduction Number}

Several factors are in consideration when an individual is prompt to commit a crime, psychological factors as well as sociological factors play an important role when modeling crime spread. In this study the assumption that crime can be spread through social interactions, led to the interpretation of the basic reproduction number as the number of secondary infectious \cite{CCC_ZFeng},\cite{Diekman}, meaning the number of new criminals that surge after an effective contact between a susceptible individual and a criminal individual. To evaluate $\mathscr{R}_0$, the next generation matrix method \cite{VanWat} is used, System \ref{ODE_SYS} is rearranged for simplicity, notice that the new infectious are the individuals in the compartments $E_1$. The disease-free  equilibrium is denoted by $x_0=(E_1,E_2,S,I,R)=(0,0,N,0,0)$.
Following the notation of \cite{VanWat}, matrices $\mathscr {F}$, $\mathscr{V}$ are

\[\mathscr{F}= \left[ \begin {array}{cccccccc}  
 \beta_{11}\frac{S(E_1+E_2)}{(N)} \\
  \\
   0 \\
  \\
 0 \\
 \\
 0 \\
 \\
 0 
\end {array} \right], \begin{array}{l}
\end{array}
\mathscr{V}=\left[ \begin {array}{cccccccc}  
  -\beta_{10}S  + (\theta_1+\gamma_1+d)E_1\\
   \\
   -\theta_3 R + (\theta_2 + \gamma_3+d)E_2\\ 
   \\
    - \Lambda + \beta_{11} \frac{S(E_1+E_2)}{N} + (\beta_{10}+d)S\\
   \\

 - \theta_1 E_1 -\theta_2E_2+(\gamma_2+\delta+d)I \\
 \\
 - \gamma_1E_1 - \gamma_2 I-\gamma_3E_2+ + (\theta_3  + d) R 
\end{array} \right], \begin{array}{l}
\end{array}\]\\

According to the next generation method in \cite{VanWat},
matrices $F$ and $V$ defined by $F=D \mathscr{F}(x_0)$ and $V=D \mathscr{V}(x_0)$ are given by

\[F= \left[ \begin{array}{cccccccc}  
 
\beta_{11} & \beta_{11} & 0 & 0 & 0\\
 \\
0 & 0 & 0 & 0 & 0\\
\\
0 & 0 & 0 & 0 & 0\\
\\
0 & 0 & 0 & 0 & 0\\
\\
0 & 0 & 0 & 0 & 0\\
\\
\end{array} \right], \begin{array}{l}
\end{array}
V=\left[ \begin{array}{cccccccc}  
\\
\theta_1+\gamma_1 +d & 0 & -\beta_{10} & 0 & 0\\
\\
0 & \theta_2+\gamma_3+d & 0 & 0 & 0\\
\\
\beta_{11} & \beta_{11} & \beta_{10}+d & 0 & 0\\
\\
-\theta_{1} & -\theta_{2} & 0 &\gamma_2+\delta+d & 0 \\
\\
-\gamma_1 & -\gamma_3  & 0 & -\gamma_2 & \theta_3+d \\
\\
\end{array} \right], 
\begin{array}{l}
\end{array}\]\\

\[\]

The basic reproduction number is the spectrum value of $FV^{-1}$ and depends on some parameters of the model

$$\mathscr{R}_0=\frac{\beta_{11}(\beta_{10}+d)}{\beta_{10}\beta_{11}+(\theta_1+\gamma_1+d)(\beta_{10}+d)}$$

The importance of calculating the basic reproduction number is rooted in the epidemiological fact that if $\mathscr{R}_0<1$, an epidemic can be controlled, otherwise, the epidemic can become a pandemic. Creating measures to control the values that affect the most $R_0$ is the ultimate goal of this work. The following graph shows the variation of $\mathscr{R}$ with respect to some particular parameters.

\begin{figure}[ht!]
\centering
    \includegraphics[viewport=60 10 800 600,clip=true,scale=.4]{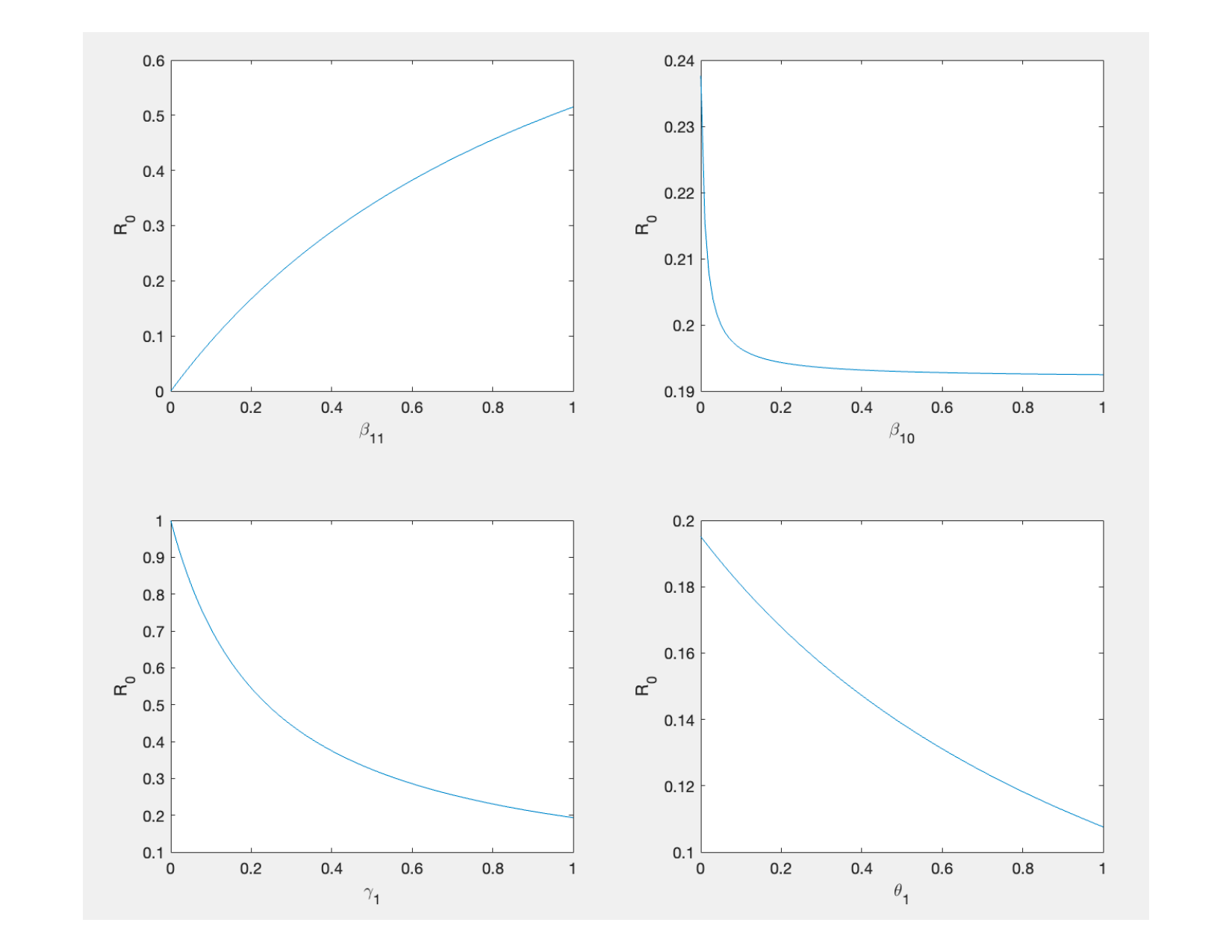}
    \vspace{-9 pt}
    \caption{$\mathscr{R}_{0}$ for $0\leq \beta_{11}\leq 1, 0\leq \beta_{10}\leq 1,0\leq \gamma_1\leq 1, 0\leq \theta_1 \leq 1$}
    \label{fig1}
    \vspace{-9 pt}
\end{figure}

In Figure 1, notice that $\mathscr{R}_{0}$ is increasing when $\beta_{11}$ is increasing, meaning that if susceptible individuals have effective contact with active criminal individuals then the number of secondary criminal individuals increases. Additionally, if the rate of incarceration $\theta_1$ increases notice that $\mathscr{R}_{0}$ decreases, which is in agreement with the fact that criminal individuals who are in jail cannot have contact with susceptible individuals and therefore the number of secondary criminals will be reduced. 


\subsection{Stability for the disease-free equilibrium}

The local stability for the disease-free equilibrium $x_0$, can be shown by using the Jacobian for the System $\ref{ODE_SYS}$evaluated at $x_0$.The following result presents the eigenvalues for the Jacobian of System $\ref{ODE_SYS}$, and shows that under particular conditions ovet the parameters, the real part of the eigenvalues are negative.

\begin{theorem}
\label{Thorem 2}
The disease-free equilibrium $x_0$ for System \ref{ODE_SYS} is stable.
\end{theorem}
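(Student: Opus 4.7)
The plan is to linearize System \ref{ODE_SYS} at $x_0=(0,0,N,0,0)$ (with the variables ordered $(S,E_1,E_2,I,R)$) and show that every eigenvalue of the Jacobian $J(x_0)$ lies in the open left half-plane. First I would write $J(x_0)$ entry-by-entry using $S/N=1$ and $E_1=E_2=I=R=0$ at the DFE; every entry reduces to a constant, and the resulting matrix is sparse---the $S$-column has nonzero entries only in the first two rows, the $I$-column has nonzero entries only in rows $4$ and $5$, and the $(I,R)$ block does not feed into the $S$-column. This structural observation is what will allow a clean extraction of negative eigenvalues.

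Expanding the characteristic determinant $\det(\lambda I-J(x_0))$ along the $I$-column immediately peels off the obviously negative eigenvalue $\lambda_1=-(\gamma_2+\delta+d)$. Rather than continue by hand through the remaining $4\times 4$ minor---which still couples $(S,E_1)$ with $(E_2,R)$ through the $\beta_{11}$ terms and the $R\to E_2$, $E_{1,2}\to R$ arrows---the cleanest next step is to invoke Theorem~2 of \cite{VanWat}. Since the splitting $(\mathscr{F},\mathscr{V})$ constructed in the previous subsection satisfies the standard sign and stability hypotheses (A1)--(A5) of that paper, and since $F,V$ are already displayed above, the van den Driessche--Watmough theorem yields local asymptotic stability of $x_0$ whenever $\mathscr{R}_0=\rho(FV^{-1})<1$ and instability whenever $\mathscr{R}_0>1$. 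Substituting the closed form for $\mathscr{R}_0$ then turns the hypothesis into the explicit parameter inequality $\beta_{11}\,d<(\theta_1+\gamma_1+d)(\beta_{10}+d)$, which is what should appear in the refined statement of the theorem.

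The main obstacle is conceptual rather than computational. First, the statement as currently written is unconditional, whereas the argument sketched above only produces stability under $\mathscr{R}_0<1$, so the theorem must be rewritten to include that hypothesis (and the conclusion strengthened to \emph{local asymptotic} stability). Second, because $\beta_{10}>0$, the point $(0,0,N,0,0)$ is not literally an equilibrium of System \ref{ODE_SYS}: the $E_1$-equation contributes an uncancelled $\beta_{10}N$ there. One therefore has to either treat $\beta_{10}$ as zero in the disease-free analysis, or reinterpret $x_0$ as the steady state of the system with spontaneous entry into criminal activity switched off. Fixing this modeling convention is the necessary prerequisite for a rigorous linearization argument; once it is in place, verifying (A1)--(A5) of \cite{VanWat} and reading off the conclusion is essentially bookkeeping.
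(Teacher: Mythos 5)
Your two ``conceptual obstacles'' are genuine and well spotted: the theorem cannot hold unconditionally, and $(0,0,N,0,0)$ is not an equilibrium of System \ref{ODE_SYS} when $\beta_{10}>0$ (the $E_1$-equation retains the uncancelled source $\beta_{10}N$). The paper addresses neither point; it simply writes down $J(x_0)$, lists the (half-trace) ``real parts'' of the eigenvalues, and imposes $\beta_{11}<\beta_{10}+\theta_1+\gamma_1+2d$. So your diagnosis of the statement is sharper than the paper's. The execution of your proof, however, has two concrete gaps.

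First, the peeling of $\lambda_1=-(\gamma_2+\delta+d)$ by expanding along the $I$-column fails: because $\gamma_2 I$ appears in the $R$-equation, the $I$-column of $J(x_0)$ carries the off-diagonal entry $\gamma_2$ in the $R$-row, so the cofactor expansion yields two terms, not a linear factor. The underlying structural claim is also false for the true system: $R$ feeds $E_2$ through $\theta_3R$, and $E_2$ feeds both $\dot S$ and $\dot E_1$ through $\beta_{11}S(E_1+E_2)/N$, so there is a cycle $E_1\to I\to R\to E_2\to E_1$ and the Jacobian has no block-triangular structure. (The paper's displayed Jacobian appears block-triangular only because it misplaces the $\theta_3$ entry in the $(I,R)$ position instead of $(E_2,R)$.) Second, invoking Theorem~2 of \cite{VanWat} is not ``bookkeeping'': hypothesis (A4) there requires $\mathscr{V}^{+}_i$ to vanish on the disease-free set for each infected compartment, whereas here $\mathscr{V}^{+}_{E_1}=\beta_{10}S=\beta_{10}N\neq 0$ at $x_0$. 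This is the same defect as the nonexistence of the DFE, and it cannot be deferred: the only repair is $\beta_{10}=0$, under which $\mathscr{R}_0$ collapses to $\beta_{11}/(\theta_1+\gamma_1+d)$ and the $\beta_{10}$-dependent inequality you propose for the refined theorem is not what a legitimate application of \cite{VanWat} delivers. Finally, note that your inequality and the paper's are complementary rather than competing: for the $2\times 2$ block in $(E_1,S)$, the paper's condition $\beta_{11}<\beta_{10}+\theta_1+\gamma_1+2d$ is exactly the trace condition, while your $\mathscr{R}_0<1$, i.e.\ $\beta_{11}d<(\theta_1+\gamma_1+d)(\beta_{10}+d)$, is exactly the determinant condition; a correct local stability argument for that block requires both, so neither route is complete on its own.
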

\begin{proof}
The Jacobian for System $\ref{ODE_SYS}$ at $x_0$, is given by

\[\]
\[J(x_0)=\left[ \begin{array}{cccccccc}  
\\
\beta_{11}-(d+\theta_1+\gamma_1) & \beta_{11} & \beta_{10} & 0 & 0 \\
\\
0 & -(\theta_2+\gamma_3+d) & 0 & 0 & 0\\
\\
-\beta_{11} & -\beta_{11} & -(\beta_{10}+d) & 0 & 0\\
\\
\theta_1 & \theta_2 & 0 & -(\gamma_2+\delta+d) & \theta_3 \\
\\
\gamma_1 & \gamma_3 & 0 & \gamma_{2} & -(\theta_3+d)\\
  \\
\end{array} \right], 
\begin{array}{l}
\end{array}\]\\
\[\]

The real part for the eigenvalues of $J(x_0)$ are given by
\begin{align*}    
\lambda_1&=-(\theta_2+\gamma_3+d)\\
\lambda_2&=-\left\{\frac{d}{2}+\frac{\gamma_2+\delta+d}{2}+\frac{\theta_3}{2}\right\}\\
\lambda_3&=-\left\{\frac{d}{2}+\frac{\gamma_2+\delta+d}{2}+\frac{\theta_3}{2}\right\}\\
\lambda_4&=\frac{\beta_{11}-(\beta_{10}+\theta_1+\gamma_1+2d)}{2}\\
\lambda_5&=\frac{\beta_{11}-(\beta_{10}+\theta_1+\gamma_1+2d)}{2}
\end{align*}
All the eigenvalues has a real negative part under the condition that $$\beta_{11}<\beta_{10}+\theta_1+\gamma_1+2d,$$

then the disease-free equilibrium is asymptotically stable, meaning that when $t$ tends to infinite the solutions approach to $x_0$. 
$\qed$
\end{proof}

The next section will include numerical simulations for System \ref{ODE_SYS}, using parameters from Table 1, also some simulations will vary some parameter values related to the basic reproduction number.  

\section{Numerical Simulations}

Numerical simulations were performed using the most recent parameter values for the United States. Table 1 includes the references from where the parameters were obtained.

Figure 2 shows the dynamics of all the compartments for the model; notice that when time increases,the population in the latent $E_1$ and $E_2$ compartments do not vanished, meaning that there will be always a number of criminal active individuals and from those who serve a time in jail/prison and have being release, they are back to be criminally active.

\begin{figure}[ht!]
\centering
    \includegraphics[viewport=10 10 800 600,clip=true,scale=.5]{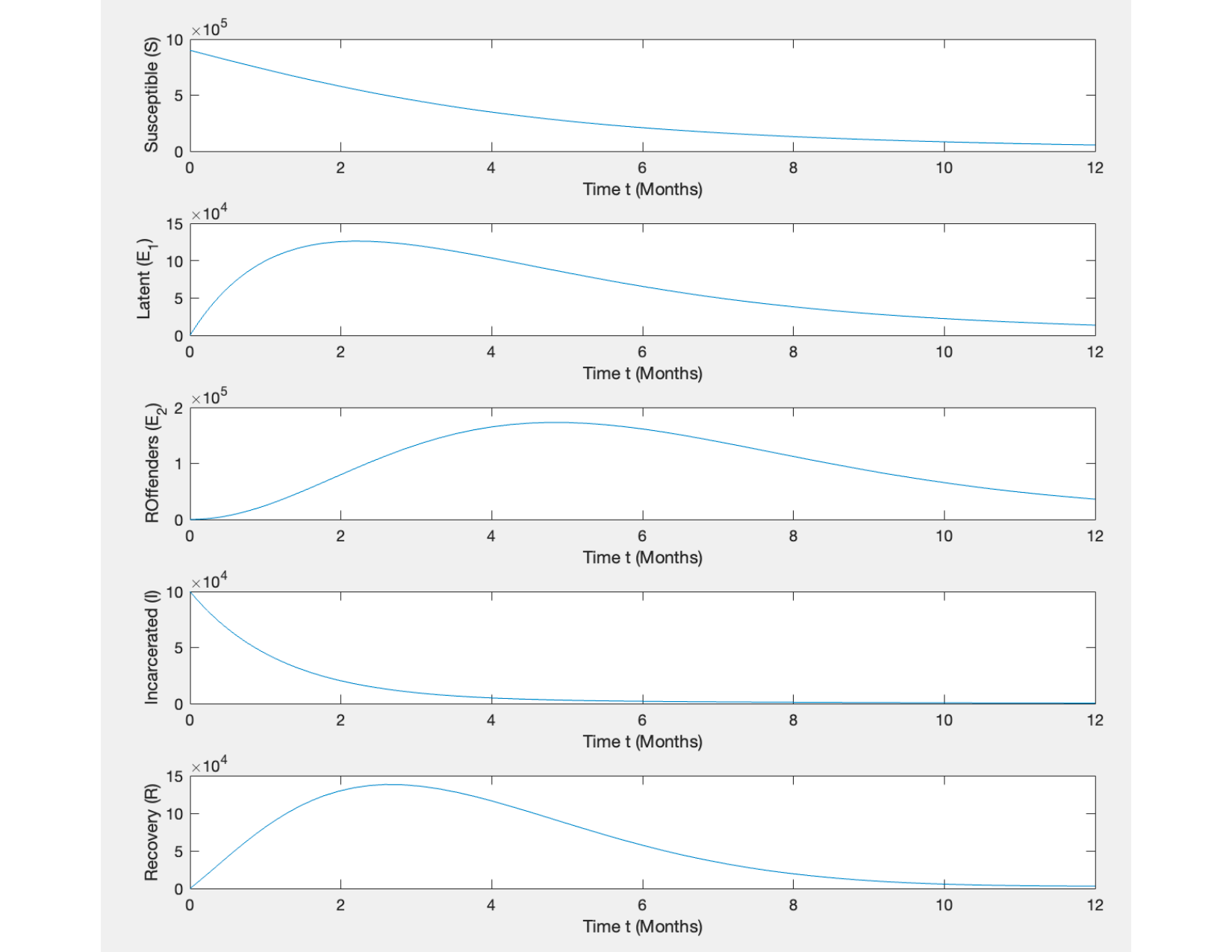}
    \vspace{-8 pt}
    \caption{$S,E_1, E_2,I, R$, for 12 months and parameters values from Table 1 }
    \label{fig2}
    \vspace{-9 pt}
\end{figure}

One of the motivations of this work is to analyze how the incarcerated population can be decreased. Figure 3 shows the behavior of the incarcerated population when varying $\theta_1$ (incarceration rate) and $\theta_2$ re-incarceration rate. Note that the number of incarcerated individuals increases, and has a maximum around month six at most of the levels when varying $\theta_2$ this is in agreement with the fact that if there are no programs to support released individuals from jail/prison to reincorporate them into society, they will be back to commit a crime. When varying $\theta_1$, the maximum values for incarcerated population are around month three levels when varying $\theta_1$, notice that if the incarceration rate $\theta_1$ is at a very low level, then the incarcerated population is decreasing during the time interval in consideration.

\begin{figure}[ht!]
\centering
    \includegraphics[viewport=10 140 900 480,clip=true,scale=.6]{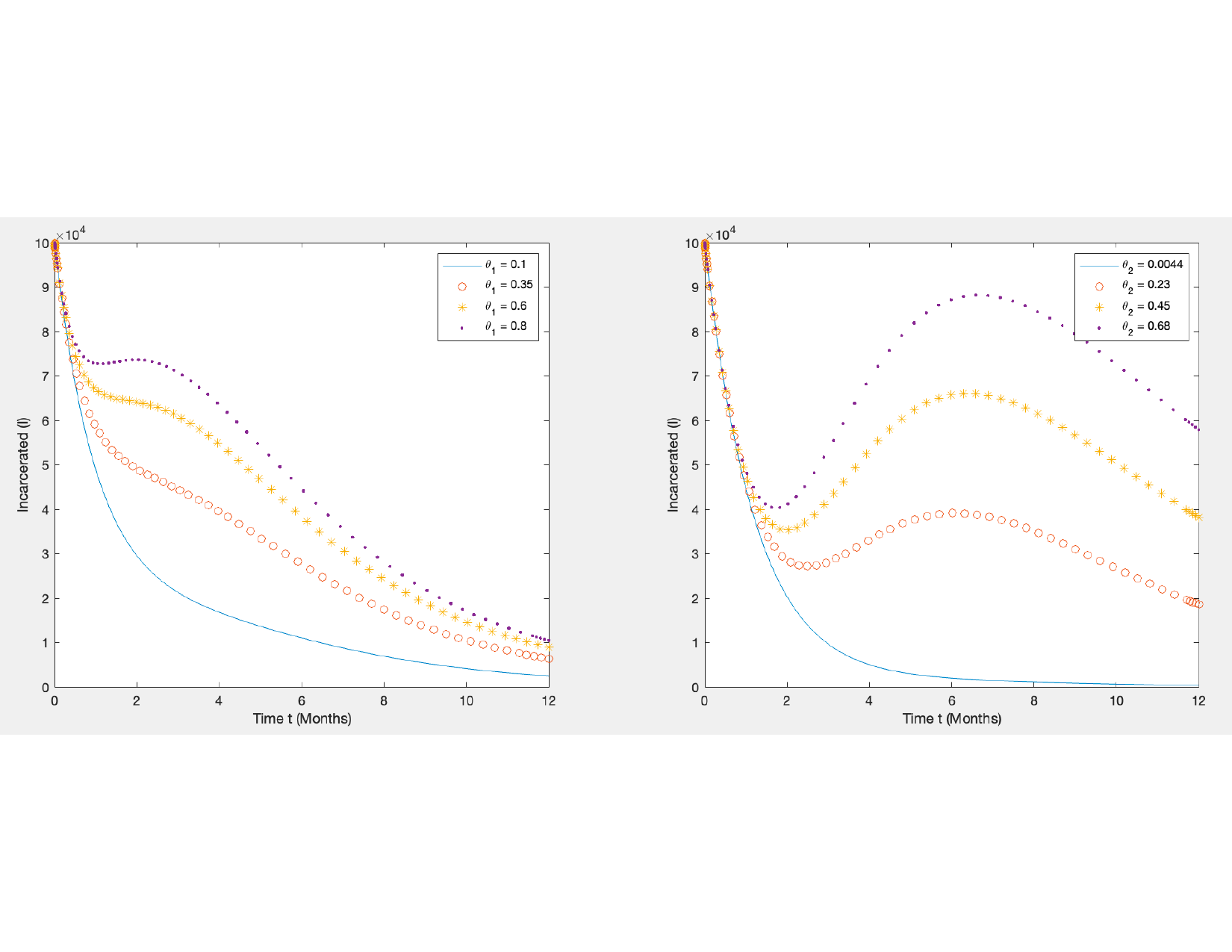}
    \caption{$I$ for $\theta_1=0.1, 0.35,0.6,0.8 $, and $\theta_2= 0.0044, 0.23, 0.45, 0.68$, other parameters are fixed.}
\end{figure}

A topic of interest when analyzing the dynamics of the spread of crime, is the influence of criminally active individuals with non-criminal individuals, the most vulnerable population is the teenagers within minorities, one important parameter to analyze is the effective contact rate $\beta_{11}$, in Figure 4 it can be observed that if $\beta_{11}$ increases then the number of individuals in the latent compartment $E_1$ increases as well, with the maximum values around the second and third months after the contact. Another influencing parameter for the latent population is $\theta_1$, in Figure 4, observe that for high values of $\theta_1$ the population in this compartment reduces significantly. 

\begin{figure}[ht!]
\centering
    \includegraphics[viewport=10 140 900 480,clip=true,scale=.6]{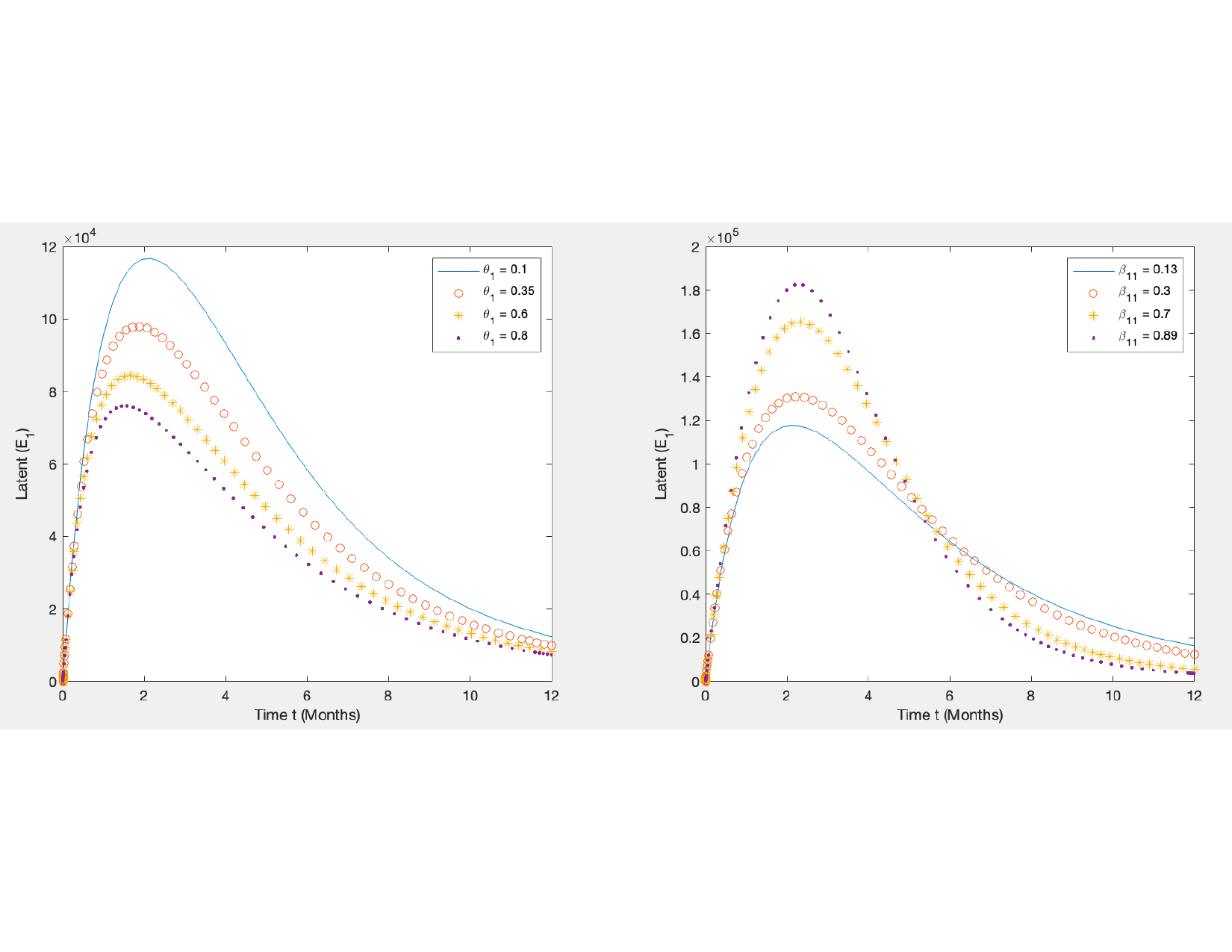}
    \vspace{-11 pt}
    \caption{$E_2$ for $\theta_1=0.1, 0.35,0.6,0.8 $, and $\beta_{11}= 0.13, 0.30, 0.70, 0.89$, other parameters are fixed.}
    \label{fig4}
    \vspace{-9 pt}
\end{figure}

Another relevant topic of interest is the $E_2$ population, individuals who  after had been released from jail/prison enter into criminal activity again and return to jail/prison. Figure 5 shows that the higher the value of $\theta_2$, the less individuals remain in this compartment, but most importantly if the recovery rate $\gamma_3$ is increasing then the population in this compartment will decreases. 

\begin{figure}[ht!]
\centering
    \includegraphics[viewport=10 140 900 480,clip=true,scale=.6]{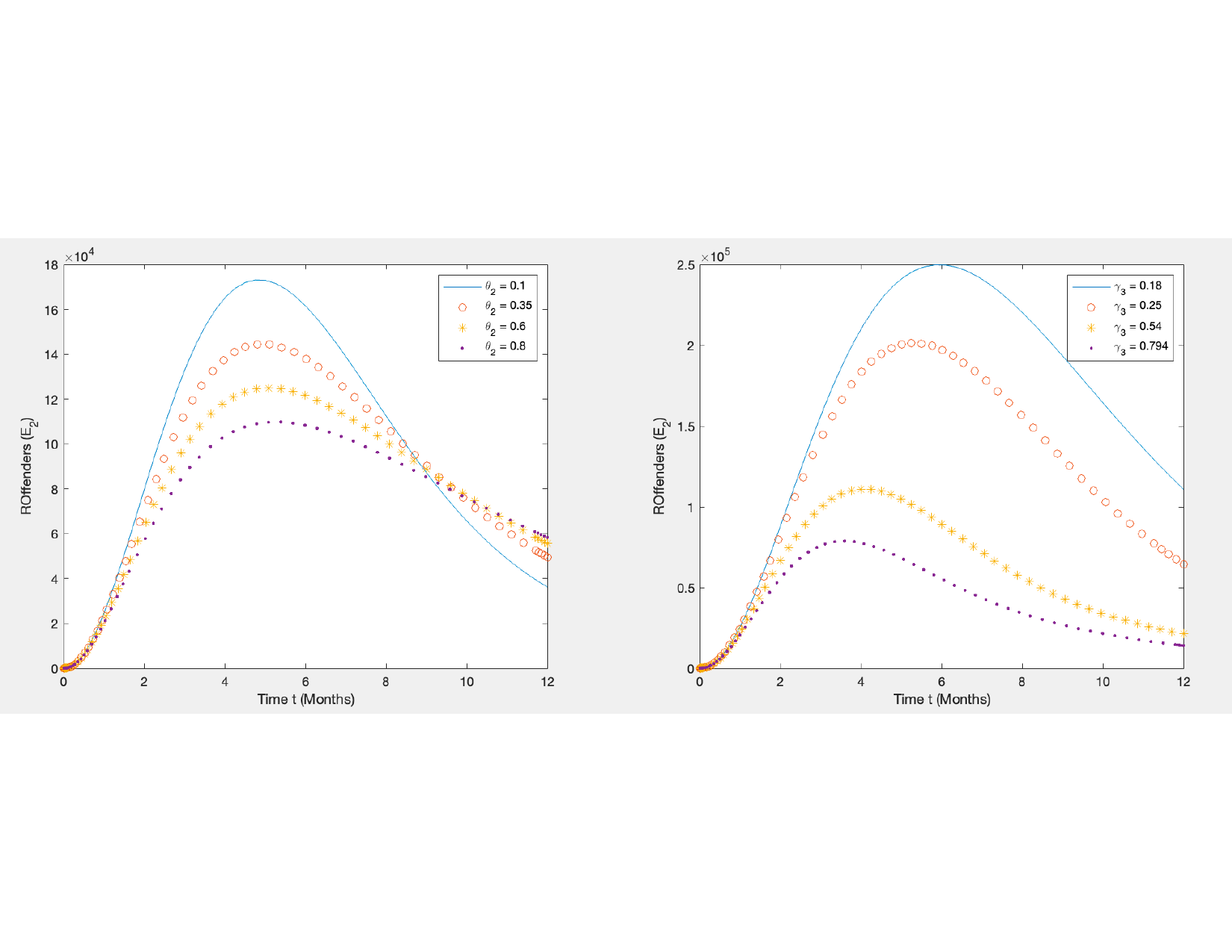}
    \vspace{-11 pt}
    \caption{$E_2$ for $\theta_2=0.1, 0.35,0.6,0.8 $, and $\gamma_3= 0.18, 0.25, 0.54, 0.794$, other parameters are fixed.}
    \label{fig5}
    \vspace{-9 pt}
\end{figure}


\section {Conclusions}

The spread of crime is a very complex problem that is affecting the United States population tremendously. To find feasible solutions to this problem multiple entities need to be involved. Evidence shows that the most vulnerable sectors of the population have a higher risk of being involved in criminal activities. Since incarceration and recidivism rates are higher for minorities, intervention programs should be offered for those sector of the populations, offering alternatives as education opportunities and extra-curricular activities, for individuals at high risk to commit criminal activities, and other measures from the local governments, will reduce the criminal activity among teens. After an individual is released from jail/prison, the local government should implement programs to support the individual and their families, facilitating the reinsertion of this individual to the society.

The number of secondary infected individuals depends on certain parameters, those parameters can be controlled using particular measures by the entities involved. For example, Reducing the contact rate between criminal-active individuals and non-criminal-active individuals will reduce the number of secondary infected individuals. 

Increasing the incarceration rate produces the effect of reducing the number of secondary infected individuals, which in epidemiology is one of the most important goals (to keep this number below one if possible). However, once an individual enter the criminal judicial system, to break this chain, it requires a tremendous effort from all the entities involved, it had been proved that recidivism is very high in communities with social and economical disadvantages. The judicial system in the US, should re-evaluate the types of crimes that deserve to send a young person into jail, because this is a prime age when an individual can be recovery from criminal activity with the appropriate support from the government and theirs families.


\end{document}